\documentclass[twocolumn]{article}
\usepackage[paper=a4paper,margin=2cm,includehead=false]{geometry}
\usepackage[hyperfootnotes=false]{hyperref}

\usepackage{amsmath,amssymb,amsfonts, amsthm}
\usepackage{algorithmic}
\usepackage{graphicx}
\usepackage{textcomp}
\usepackage{xcolor}
\usepackage{pgfplots}
\usepackage{float}
\usepackage{url}
\usepackage{tikz}
\usepackage{hyperref}
\usepackage{subcaption}
\usetikzlibrary{patterns,backgrounds}
\AtBeginDocument{
	\providecommand{\tikzpicturedependsonfile}[1]{}
	\providecommand{\tikzsetnextfilename}[1]{}
}

\newenvironment{experimentfigure}{\begin{figure*}[tb]}{\end{figure*}}

\pgfplotsset{compat=newest}
\pgfplotsset{minor grid style={dashed,very thin, color=blue!15}}
\pgfplotsset{major grid style={very thin, color=black!30}}
\pgfplotsset{
	every axis title/.style={font=\scriptsize, at={(0.5,1)},anchor=base, yshift=1pt},
	xtick pos=lower,
}
\pgfplotsset{
	automatically generated axis/.style={
		height=105pt,%
		width=174pt,
		scaled ticks=false,
		xticklabel style={font=\tiny,/pgf/number format/.cd, fixed,/tikz/.cd},%
		yticklabel style={font=\tiny,/pgf/number format/.cd, fixed,/tikz/.cd},%
		x label style={at={(ticklabel cs:0.5, -5pt)},name={x label},anchor=north,font=\scriptsize},
		y label style={at={(ticklabel cs:0.5, -5pt)},name={y label},anchor=south,font=\scriptsize},
	},
	automatically generated symbolic/.style={
		height=105pt,
		width=500pt,
		xticklabel style={font=\tiny,rotate=90},
		yticklabel style={font=\tiny,/pgf/number format/.cd, fixed,/tikz/.cd},%
		x label style={at={(ticklabel cs:0.5, -5pt)},name={x label},anchor=north,font=\scriptsize},
		y label style={at={(ticklabel cs:0.5, -5pt)},name={y label},anchor=south,font=\scriptsize},
	},
	first kind/.style={
		legend style={font=\scriptsize,fill=none},
		legend columns=2,legend cell align=left,
	},
	posterior kind/.style={
		legend style={draw=none},
	},
	nines axis/.style={
		height=105pt,
		width=255pt,
		scaled ticks=false,
		xticklabel style={font=\tiny,/pgf/number format/.cd, fixed,/tikz/.cd},
		yticklabel style={font=\tiny,/pgf/number format/.cd, fixed,/tikz/.cd},
		x label style={at={(ticklabel cs:0.5, -5pt)},name={x label},anchor=north,font=\scriptsize},
		y label style={at={(ticklabel cs:0.5, -5pt)},name={y label},anchor=south,font=\scriptsize},
		legend style={font=\scriptsize,fill=white,row sep=-1pt,inner sep=1pt},
		legend cell align=left,
		enlarge x limits={abs=5pt},
		grid=both,
	}
}
\def\timetickcode{%
	\pgfkeys{/pgf/fpu,/pgf/fpu/output format=fixed}%
	\pgfmathparse{\tick}%
	\edef\tmp{\pgfmathresult}%
	\pgfmathtruncatemacro\seconds{\tmp-60*floor(\tmp/60)}%
	\pgfmathtruncatemacro\tmp{(\tmp - \seconds)/60}%
	\pgfmathtruncatemacro\minutes{\tmp-60*floor(\tmp/60)}%
	\pgfmathtruncatemacro\tmp{(\tmp - \minutes)/60}%
	\pgfmathtruncatemacro\hours{\tmp-24*floor(\tmp/24)}%
	\pgfmathtruncatemacro\days{(\tmp - \hours)/24}%
	\ifnum\days=0%
		{\tiny\hours:\minutes:\seconds}%
	\else%
		{\tiny\days-\hours:\minutes}%
	\fi%
}
\def\memorytickcode{%
	\pgfkeys{/pgf/fpu,/pgf/fpu/output format=fixed}%
	\pgfmathtruncatemacro\unitcase{log2(\tick+0.001)/10+1.1}%
	\pgfmathparse{\tick / pow(1024,\unitcase-1)}%
	\pgfmathprintnumber{\pgfmathresult}%
	\ifcase\unitcase B%
		\or KB%
		\or MB%
		\or GB%
		\or TB%
		\else +1024,\pgfmathprintnumber{unitcase}B%
	\fi%
}
\tikzset{
	automatically generated plot/.style={
		/pgfplots/error bars/x dir=both,
		/pgfplots/error bars/y dir=both,
		/pgfplots/error bars/x explicit,
		/pgfplots/error bars/y explicit,
		/pgfplots/error bars/error bar style={ultra thin,solid},
		/tikz/mark options={solid},
	},
	automatically generated bar plot/.style={
		/pgfplots/error bars/y dir=none,
		/pgfplots/error bars/y explicit,
	},
	automatically generated boxplot/.style={
	},
	x time ticks/.style={
		/pgfplots/scaled x ticks=false,
		/pgfplots/xticklabel={\timetickcode},
	},
	y time ticks/.style={
		/pgfplots/scaled y ticks=false,
		/pgfplots/yticklabel={\timetickcode}
	},
	x memory ticks from kilobytes/.style={
		/pgfplots/scaled x ticks=false,
		/pgfplots/xticklabel={\memorytickcode}
	},
	y memory ticks from kilobytes/.style={
		/pgfplots/scaled y ticks=false,
		/pgfplots/yticklabel={\memorytickcode}
	},
}

\tikzset{
	bar group position/.style 2 args={
		/pgf/bar shift={%
			-0.5*(#2*\pgfplotbarwidth + (#2-1)*\pgfkeysvalueof{/tikz/bar group skip}) +
			(.5+#1)*\pgfplotbarwidth + #1*\pgfkeysvalueof{/tikz/bar group skip}
		},
	},
	bar group skip/.initial=2pt,
}

\colorlet{OFT}{green}
\colorlet{OFT MV}{black}
\colorlet{MRLS 1}{red}
\colorlet{MRLS KSP}{blue}
\colorlet{FT 2}{cyan}
\colorlet{MRLS 2}{magenta}
\colorlet{FT 3}{cyan}
\colorlet{MRLS 3}{violet}

\tikzset{
	generated bar/.style={/pgfplots/area legend},
	OFT bar/.style={generated bar,fill=OFT!20,postaction={pattern=grid}},
	OFT MV bar/.style={generated bar,fill=OFT MV!20,postaction={pattern=horizontal lines}},
	MRLS 1 bar/.style={generated bar,fill=MRLS 1!20,postaction={pattern=north west lines}},
	MRLS KSP bar/.style={generated bar,fill=MRLS KSP!20,postaction={pattern=north east lines}},
	FT 2 bar/.style={generated bar,fill=FT 2!20,postaction={pattern=crosshatch}},
	MRLS 2 bar/.style={generated bar,fill=MRLS 2!20,postaction={pattern=dots}},
	FT 3 bar/.style={generated bar,fill=FT 3!20,postaction={pattern=bricks}},
	MRLS 3 bar/.style={generated bar,fill=MRLS 3!20,postaction={pattern=checkerboard}},
	DF+ bar/.style={OFT bar},
	DF bar/.style={FT 2 bar},
	MRLS 1.46 bar/.style={MRLS 1 bar},
	,
	OFT line/.style={OFT,dashed,mark=square,thick,mark options={scale=0.8,solid}},
	MRLS 1 line/.style={MRLS 1,solid,mark=o,thick,mark options={scale=0.8,solid}},
    FT 2 line/.style={FT 2,dashed,mark=diamond,thick,mark options={scale=0.8,solid}},
    MRLS 2 line/.style={MRLS 2,solid,mark=triangle,thick,mark options={scale=0.8,solid}},
	DF+ line/.style={OFT line},
	DF line/.style={FT 2 line},
	MRLS 1.46 line/.style={MRLS 1 line},
	MRLS 3 line/.style={MRLS 3,solid,mark=square,thick,mark options={scale=0.8,solid}},
	FT 3 line/.style={FT 3,dashed,mark=diamond,thick,mark options={scale=0.8,solid}},
}

\newtheorem{theorem}{Theorem}[section]
\newtheorem{definition}[theorem]{Definition}

\colorlet{region 2}{red!20}
\colorlet{region 3}{blue!20}
\colorlet{region 4}{green!20}
\colorlet{region 5}{violet!20}
\colorlet{region 6}{yellow!20}

\def\costlinks{\ensuremath{\mathrm{Cost}_{\mathrm{links}}}}
\def\costswitches{\ensuremath{\mathrm{Cost}_{\mathrm{switches}}}}

\DeclareMathOperator{\rem}{rem}

\begin{document}

	\title{Extreme-Scale Interconnection Networks}

	\author{Alejandro Cano$^\dagger$, Cristina Brinza$^\ddagger$, Cristóbal Camarero$^\dagger$, Carmen Martínez$^\dagger$, Ramón Beivide$^{\dagger,\ast}$\\
		$^\dagger$\textit{Universidad de Cantabria}, SPAIN\\
		$^\ddagger$Performed while working for the \textit{Universidad de Cantabria}, SPAIN\\
		$^\ast$\textit{Barcelona Supercomputing Center}, SPAIN\\
		\small\{alejandro.cano, cristobal.camarero, carmen.martinez,ramon.beivide\} @unican.es, brinza.cristina@outlook.com
	}

\maketitle

	\begin{abstract}
		Extreme-scale data centers are the backbone of next-generation computing, enabling breakthroughs in science, artificial intelligence, and global innovation through unprecedented processing power and scalability. This work examines leaf-spine network topologies that offer extreme scalability—connecting a vast number of endpoints—while delivering strong performance at low cost. It takes as a starting point two alternatives to the widely used Fat-Tree topology: the Orthogonal Fat-Tree and the Random Folded Clos. The resulting Multipass Random Leaf-Spine (MRLS) networks inherit their advantages and surpass Fat-Trees in both throughput and flexibility. To fully leverage the topological properties of these networks, various non-minimal routing strategies are considered. An exhaustive evaluation using an interconnection network simulator provides insight into the trade-offs and scalability of these topologies under realistic conditions, positioning them as a promising solution for extreme-scale systems.
		 The MRLS achieves a 50\% speedup against a Fat-Tree for an All2All collective comprising 100k endpoints, and 100\% against Dragonfly networks for the same collective.
	\end{abstract}

	\section{Introduction}

	As workload demands in Data Center (DC) and High-Performance Computing (HPC) systems continue to increase, the scale of network interconnects also expands. This growth directly impacts the achievable performance of the overall system. Extreme-scale DCs interconnect hundreds of thousands of endpoints, with projections of their upgrade to millions~\cite{citaextremescale,SemiAnalysis,Pilz}.

	A significant proportion of the interconnects used nowadays are indirect networks such as the Fat-Tree (FT) or Folded Clos~\cite{top500}.
In an indirect network (IN), switches are organized into levels.
The lowest level consists of \textsl{leaf switches}, which connect directly to endpoints and to switches in the level above.
The highest level consists of \textsl{root switches} (or \textsl{spines} if it's a two-level network), which connect only to the switches below them.
Communication between endpoints is \textsl{indirect}, meaning traffic must traverse intermediate switches that are not connected to any endpoint.
In contrast, in direct networks (DN) such as Dragonfly (DF)~\cite{Kim}, there is no distinction among switches, and all of them have endpoints connected.

	The FT is a particularly notable indirect network for being rearrangeably non-blocking, meaning it can support communication between any pair of nodes using disjoint paths.
	This implies there is no interference between multiple transmissions when employing circuit switching. %
	However, the implementation cost of FT networks is substantially higher compared to other alternatives, prompting the exploration of more scalable and cost-effective solutions.
	For extreme-scale systems, where the network can account for a significant fraction of the system cost, minimizing it is crucial, even if it poses additional challenges.

	Among scalable 2-level network topologies, the Orthogonal Fat-Tree (OFT) is a prominent candidate~\cite{oft}. Figure~\ref{fig:OFT} shows the smallest one. While it offers much greater scalability than FT, its performance degrades significantly when subjected to non-uniform traffic. To address this, non-minimal routing strategies are often employed to improve throughput and load balancing~\cite{d2}.
	However, a key limitation of the OFT lies in its rigid construction requirements: it can only be built for very specific parameter values. This constraint limits both the applicability and expandability of the topology, making it difficult to adapt to the desired system size.

	\begin{figure}[tb]
	\centering%
	\begin{tikzpicture}[y=0.9cm,x=0.6cm,font=\scriptsize,
	]
	\foreach \a in {0,1}
	\foreach \b in {0,1}
	\foreach \c in {0,1}
	{
		\pgfmathtruncatemacro\heigh{\a*4+\b*2+\c}
		\ifthenelse{\heigh=0}{}
		{
			\path[fill] (\heigh,0) node[anchor=south] {} circle (2pt) coordinate (left \a \b \c)
				+(7,0) coordinate (right \a \b \c) circle (2pt) node [anchor=south] {}
				+(\heigh-.5,1) coordinate (above \a \b \c) circle (2pt) node [anchor=south] {}
				;
			\foreach \s in {-1,0,1}
			{
				\draw (left \a \b \c) -| +(\s*4pt,-7pt) circle (1.5pt);
				\draw (right \a \b \c) -| +(\s*4pt,-7pt) circle (1.5pt);
			}
		}
	}
	\foreach \la in {0,1}
	\foreach \lb in {0,1}
	\foreach \lc in {0,1}
	\foreach \ra in {0,1}
	\foreach \rb in {0,1}
	\foreach \rc in {0,1}
	{
		\pgfmathtruncatemacro\goodl{\la==1 || \lb==1 || \lc==1}
		\pgfmathtruncatemacro\goodr{\ra==1 || \rb==1 || \rc==1}
		\pgfmathtruncatemacro\dot{mod(\la*\ra + \lb*\rb +\lc*\rc,2)}
		\pgfmathtruncatemacro\good{\goodl && \goodr && \dot==0}
		\ifthenelse{\good=1}
		{
			\draw (left \la \lb \lc) -- (above \ra \rb \rc) -- (right \la \lb \lc);
		}
	}
	\end{tikzpicture}%
	\caption{Orthogonal Fat Tree with parameter $q=2$. It has $n=14$ leaf switches and $k=3$ up-links at the leaf switches.}
	\label{fig:OFT}
\end{figure}

\begin{figure}[tb]
	\centering%
	\begin{tikzpicture}[y=0.9cm,x=0.6cm,font=\scriptsize]
	\path[fill=blue] (0,0) circle (3pt) (13,0) circle (3pt);%
	\foreach \x in {0,...,13}
	{
		\path[fill] (\x,0) circle (2pt) coordinate(nodo \x);
		\foreach \s in {-1,0,1}
		{
			\draw (nodo \x) -| +(\s*4pt,-7pt) circle (1.5pt);
		}
	}
	\foreach \x in {14,...,20}
	{
		\pgfmathparse{2*(\x-14)+.5}
		\path[fill] (\pgfmathresult,1) circle (2pt) coordinate(nodo \x);
	}
	\foreach \x/\A in {
		0/{16,18,14},
		1/{14,15,16},
		2/{14,17,19},
		3/{17,18,16},
		4/{15,16,17},
		5/{19,17,15},
		6/{20,18,14},
		7/{20,16,18},
		8/{18,17,14},
		9/{16,18,20},
		10/{19,17,15},
		11/{20,14,19},
		12/{20,19,15},
		13/{20,19,15},
		14/{0,1,2,6,8,11},
		15/{1,4,5,10,12,13},
		16/{0,1,3,4,7,9},
		17/{2,3,4,5,8,10},
		18/{0,3,6,7,8,9},
		19/{2,5,10,11,12,13},
		20/{6,7,9,11,12,13}}
	{
		\foreach \y in \A
		{
			\draw (nodo \x) -- (nodo \y);
		}
	}
	\draw[very thick,blue] (nodo 0) -- (nodo 14) -- (nodo 6) -- (nodo 20) -- (nodo 13);
	\end{tikzpicture}%
	\caption{MRLS with $n=14$ leaf switches and $k=3$ up-links at the leaf switches. Highlighted a pair of nodes at distance 4 and one path.}\label{fig:RFC}
\end{figure}

The Random Folded Clos (RFC)~\cite{rfc} offers an alternative topology designed to balance performance and scalability while facilitating flexible system expansion.
Unlike the strictly defined interconnection pattern of the OFT, the RFC employs a random interconnection pattern between switch levels.

Figure~\ref{fig:RFC} depicts a 2-level, randomly interconnected network featuring the same number of switches and endpoints as the OFT shown in Figure~\ref{fig:OFT}.
However, this topology deviates from the standard RFC, as it lacks inherent up-down connectivity and allows minimal path lengths of up to four hops.
Consequently, we introduce this distinct topology which we call \textsl{Multipass Random Leaf-Spine} (MRLS), alongside a corresponding routing mechanism.
This work investigates such networks and evaluates their suitability for extreme-scale systems.

In particular, our contributions are as follows:
\begin{enumerate}
    \item The proposal of MRLS, a novel network topology, alongside a routing method for randomly wired indirect networks.
    \item A comprehensive analysis of the architectural benefits of MRLS, demonstrating key topological advantages such as fine-grained scalability and enhanced routing efficiency.
    \item An extensive performance evaluation comparing MRLS networks against state-of-the-art indirect and direct topologies such as Dragonfly, including large-scale simulations supporting up to 100,000 endpoints.
\end{enumerate}

The structure of the paper is as follows. Section~\ref{sec:background} introduces the notation for topologies and formulae for performance and cost. Section~\ref{sec:motivation} elaborates on the limitations of current topologies. In Section~\ref{sec:MRLS}, the MRLS proposal is formally introduced and studied. Section~\ref{sec:setup} describes the workloads for simulation and selected scenarios. The simulation results are shown and analyzed in Section~\ref{sec:results}. Section~\ref{sec:conclusions} concludes the paper.

\section{Background}\label{sec:background}

	This section establishes the basic concepts and notation used in the paper.
We begin by defining the network models and standard topologies, followed by a formal analysis of the relationships between topological distances, performance, and cost.
Table~\ref{tbl:notation} summarizes the key notation employed throughout the article.

\subsection{Topologies}

\begin{table}
	\centering%
	\caption{Notation throughout the article.}%
	\label{tbl:notation}%
	\begin{tabular}{|l|p{.8\linewidth}|}
	\hline
	$S$	&	Total number of endpoints.\\
	$N$	&	Total number of switches.\\
	$M$	&	Total number of bidirectional links.\\
	$R$	&	Switch radix.\\
	$A$	&	Average distance between leaf switches in an IN, or between vertices in a DN.\\
	$D$	&	Diameter, the maximum distance between leaf switches.\\
	$d(x,y)$ & Distance from $x$ to $y$.\\
	$\Theta$ & Estimator of the load per endpoint.\\
	$l$	&	Number of switch levels.\\
	$h$	&	Height of a multistage network, $h=l-1$.\\
	$N_i$	&	Number of switches in the level $i$.\\
	$u$	&	Number of upward ports of a switch in an IN.\\
	$d$	&	Number of downward ports of a switch in an IN.\\
	$f$	&	Thickness factor of the switches in an IN.\\
	$q$	&	A power of prime, employed in OFT (cf. \cite{projective,slimfly}).\\
	\hline
	\end{tabular}
\end{table}

The \textsl{topology} of a network describes the interconnection pattern between endpoints and switches.
In many high-performance systems, each endpoint possesses a single network interface and is connected to one switch.
This allows us to omit the endpoints and abstract the topology as a graph where vertices represent switches and edges represent the physical links between them.

Although this abstraction simplifies the model, it is robust.
Specialized architectures like BCube~\cite{BCube} (multi-interface endpoints) or GPU-dense nodes like those in the Frontier supercomputer (multiple NICs per endpoint) technically deviate from this simple model.
However, these cases represent only a constant-factor increase in connectivity.
Therefore, we proceed with the switch-level graph model, noting that extensions to multi-interface scenarios are straightforward.

    Next, FTs and other more cost-effective topologies are introduced.

\subsubsection{The non-blocking Fat-Tree}
The most prevalent topology in modern DCs and HPC systems is the Fat-Tree~\cite{Leiserson}, which is essentially a multistage folded Clos network~\cite{Clos}.

To ensure non-blocking behavior, standard Fat-Trees are built such that the aggregate bandwidth going upwards from any switch equals the bandwidth going downwards.
Using switches with radix $R$, this implies splitting ports evenly, using $R/2$ ports for \textit{up} links and the other $R/2$ for \textit{down} links.

For a Fat-Tree of height $h$:
\begin{itemize}
	\item There are $h+1$ levels of switches.
	\item The network supports $S=2(R/2)^{h+1}$ endpoints.
	\item The diameter is $2h$, and the average distance $A$ is close to the diameter.
\end{itemize}

While exhibiting high performance, the Fat-Tree is expensive due to the high number of cables and switches required to maintain its non-blocking bandwidth.

\subsubsection{Cost-Effective topologies}
Driven by the high cost of Fat-Trees, extreme-scale systems often employ other topologies to minimize it when interconnecting the same number of endpoints.
Examples of topologies that have been proposed to reduce the cost of Fat-Trees include:

\begin{itemize}
	\item Slimmed Fat-Tree \cite{Kamil,Leon,Jokanovic,Navaridas}.
	\item Orthogonal Fat-Tree (OFT)~\cite{oft}.
	\item Random Folded Clos (RFC)~\cite{rfc}.
	\item Dragonfly~\cite{Kim}.
\end{itemize}

All these topologies relax the non-blocking assumption of the FT.
The OFT, RFC, and Dragonfly reduce the average distance with respect to FTs, allowing for a decrease in cost at the expense of performance degradation under non-uniform traffic patterns.

\subsection{Basic Metrics}
The design of any interconnection network is governed by a tight interplay between physical constraints together with performance and cost metrics. Here, we formalize these relationships.

\subsubsection{Topological Distances}
We consider two primary distance metrics measured over the switch-to-switch graph:
\begin{itemize}
\item The \textsl{diameter ($D$)}, the maximum shortest path between any pair of switches connected to endpoints.
\item The \textsl{average distance ($A$)}, the average minimal path length across all pairs of switches connected to endpoints.
\end{itemize}
Each of these metrics has a counterpart considering paths between any pair of switches instead of just between switches connected to endpoints.
These are denoted in the paper as $D^*$ and $A^*$, respectively, and acquire relevance when dealing with evolved packet routing mechanisms such as derouting through an arbitrary switch.

If the routing algorithm allows non-minimal paths, it is important to note that the length of the employed paths may be different in practice.
In many cases, there is a \textsl{maximum path length}, which is a bound on the number of switches that can be visited along a route.
This is a routing-dependent metric complementary to the diameter.

\subsubsection{Performance}

In practice, the performance of an interconnection network is determined by the \textsl{injected load ($L$)} per endpoint\footnote{This is equivalent to the accepted load per endpoint, or throughput.},
which represents the average amount of traffic an endpoint injects into the network normalized by the link bandwidth.
A value of $L=1$ means that all endpoints inject traffic into the network at full rate without provoking saturation.
Under our assumptions, by construction, it is impossible for an endpoint to inject traffic over its nominal rate, that is, with $L > 1$.

While $L$ is a direct measure of performance, it depends on many factors, including the application's communication patterns and the routing algorithm, making it difficult to model directly.
Consequently, it is useful to set an upper bound on $L$ defined only by the network's topological properties in a specific traffic scenario.
We denote it as the network \textsl{capacity limit ($\Theta$)}, which is the performance bound of the topology under uniform random traffic.
Observe that other traffic patterns could surpass $\Theta$.
As established in prior work~\cite{projective,Singla14}, this capacity limit is given by:

\begin{equation}\label{eq:general}
	\Theta=\frac{2M}{S \cdot A}
\end{equation}
where $S$ is the number of endpoints, $M$ is the number of physical links (with $2M$ representing the total full-duplex bandwidth), and $A$ is the average distance.
That is, assuming a constant target capacity $\Theta$, increasing the number of endpoints $S$ without drastically increasing the number of links $M$ requires minimizing the average distance $A$.
This motivation drives the interest in topologies that approach the \textsl{Moore Bound}~\cite{Hoffman}, which defines the theoretical minimum diameter for a graph of a given degree.

The capacity limit serves to classify networks in the following way:

\begin{itemize}
\item A value of $\Theta = 1$ indicates a \textsl{balanced} network, which has enough capacity to support all endpoints transmitting at their full rate. %
\item A value of $\Theta < 1$ signifies an \textsl{oversubscribed} network, where the topology itself is the bottleneck. %
\item A value of $\Theta > 1$ could suggest an \textsl{overdimensioned} network, having more capacity than needed to support full-rate traffic from all endpoints. Nevertheless, this exceeded capacity can be beneficial for handling non-uniform traffic patterns and providing robustness against congestion. The augmented capacity can be leveraged by adaptive routing algorithms, which may use non-minimal paths—increasing the effective path length to a value higher than $A$—to balance the load across the network.
\end{itemize}

Under these conditions, the relationship $0 \leq L \leq \min\{1, \Theta\}$ holds.
Here, achieving $L = \min\{1, \Theta\}$ is also challenging.
Even in non-oversubscribed networks ($\Theta \geq 1$), achieving the maximum injected load of $L=1$ requires additional resources.
Factors like Head-of-Line (HoL) blocking or poor load balancing due to packet routing can prevent the system from reaching its full potential.
Ideal routing and Virtual Output Queueing~\cite{principles} would allow the system to reach $L = \min\{1, \Theta\}$; however, this is not always realistic.
Thus, while having $\Theta \geq 1$ is a necessary condition for achieving good performance, it is not a guarantee.

\subsubsection{Scalability}

The \textsl{scalability} of a topology is its ability to increase the number of endpoints $S$ while maintaining a capacity limit $\Theta$.

Usually, to increase the number of endpoints in a topology, either the number of switches or their radix $R$ and the number of links connecting them, must also be increased.
Also, the capacity limit $\Theta$ is typically set to 1, representing a balanced network. However, other values can be considered based on specific performance requirements.
For example, a fully populated Fat-Tree has a scalability function of  $S=2^{-h}R^{h+1}$, with $\Theta=1.0$.
In contrast, a simple ring topology is considered \textsl{unscalable} because maintaining capacity $\Theta$ requires the endpoint count to freeze as the network grows (since $A$ grows linearly with $N$).

\subsubsection{Cost}
Finally, we quantify the cost of a topology by means of very simple metrics such as the number of links and switches required by each endpoint.
They can be directly obtained from Eq.~(\ref{eq:general}).

The \textsl{Link Cost} per endpoint can be expressed in terms of the average distance:
\begin{equation}\label{eq:cost}
\costlinks = \frac{M}{S} = \frac{\Theta \cdot A}{2}
\end{equation}

Similarly, the \textsl{Switch Cost} per endpoint depends on how many endpoint-facing ports versus inter-switch ports are used.
If we denote $k$ as the average degree of the switch-to-switch graph, the cost due to switches is
\begin{equation}\label{eq:cost_switches}
\costswitches = \frac{N}{S} = \frac{\Theta \cdot A}{k}.
\end{equation}
These equations show that low-diameter topologies, which provide low $A$ values, are essential for building cost-effective, extreme-scale systems.

\section{Motivation}\label{sec:motivation}

There are several reasons why the research conducted in this work is necessary. The most important ones are related to network cost and scalability, which are considered below.

\subsection{Network Cost}

Extreme-scale DCs demand interconnection networks that connect vast numbers of endpoints while balancing high performance against manageable costs. A natural question arises: \textit{How can we minimize the cost per endpoint for a specific system size and switch technology?}

The cost of a FT grows rapidly with the number of endpoints, as $\costlinks=h$ and $\costswitches=\tfrac{2}{R}(h+\tfrac12)$, making them impractical for extreme-scale systems.
With switches of 64 ports ($R=64$), a 2-level Fat-Tree can support up  to 2048 endpoints, and a 3-level Fat-Tree can support up to 65,536 endpoints.
To surpass this number, one would need to increase the number of levels to 4, leading to a prohibitive cost in cables (50\% more than in a 3-level FT), switches (40\% more), and energy consumption.

Cost reduction is often achieved through coarse adjustments, such as \textsl{oversubscription}.
For instance, in a FT, reducing the number of non-leaf switches cuts costs but immediately halves the bisection bandwidth.
Hence, one must choose between a generic, expensive non-blocking FT or a significantly cheaper, lower-performance slimmed FT.

Similarly, direct networks like Dragonfly~\cite{Kim} reduce costs by optimizing for uniform traffic, but this structural decision fixes the performance characteristics.
If the workload demands slightly better worst-case performance, the network cannot be easily tuned up, requiring a different topological class entirely.

Therefore, to achieve extreme scalability while maintaining good performance at an attainable cost, other, more cost-efficient topologies must be considered.

\subsection{Rigidity of network topologies}

A major drawback of most cost-efficient topology families is the discrete nature of their valid network sizes, suffering from structural rigidity.
If the system needs to be expanded in the future, it may be very challenging and costly to find a new valid configuration.
Also, performance is usually rigid within a given topology family.
To overcome these sizing constraints, network topologies should be flexible, with fine-grained scalability and performance.

\subsubsection{Fine-grain scalability}
Fine-grain scalability is intrinsically linked to \textsl{incremental expandability}~\cite{Zhao} (also referred to as fine-grained incremental expansion).
This property ensures that a DC can grow its capacity by small increments—adding a single rack or switch—rather than requiring massive, step-wise upgrades to reach the next valid topological size.
Randomly wired topologies indeed satisfy these criteria satisfactorily, as described in \cite{Singla,random18}.

It is important to separate network scalability from the practice of connecting fewer endpoints than possible.
Almost any topology can support any number of endpoints $S$ below its maximum by simply leaving ports unconnected (depopulation).
The following definition formalizes, in our context, the fine-grain scalability concept:

\begin{definition}\label{def:finegrain}
	The \textsl{fine grain scalability} of a topology is the density of the achievable $(N_1,R)$ pairs, with $N_1=N$ for direct networks. This density is the inverse of the gap between the values of two consecutive instances.
\end{definition}

We prefer to use $N_1$ instead of the whole $N$ as it represents more clearly the goal of adding switches with endpoints, rather than just switches for their own sake.

The FT offers moderate granularity.
While it can be depopulated by removing entire pods, maintaining full bisection bandwidth typically restricts the network to specific sizes.
That is, switches at the top level cannot be removed, as it would lead to a general performance loss.

The OFT is much more rigid.
Due to its reliance on specific routing paths to handle adversarial traffic, leaf switches cannot be selectively removed without breaking connectivity guarantees.
Thus, only full configurations are possible.
The values of radix are $R=2(q+1)$, with $q$ a prime power.
By the Prime Number theorem this means that around some target $R^*$, there are about one valid value from each $2\log R^*$.
Its number of leaf switches $N_1=2(q^2+q+1)$ gives an additional restriction.
Around some target $N_1^*$ there will be about one valid value for each $\sqrt{2N_1^*}\log N_1^*$.

\subsubsection{Fine-grain performance}
In standard topologies, not only is the size of the systems is rigid, but also their performance
This work explores extreme-scale topologies that not only are fine-grained scalable but also allow for \textsl{fine-grain performance tuning}.
This implies that for any desired number of endpoints $S$ and switch radix $R$, the topology can be adjusted to achieve a specific performance level, rather than being constrained to a few discrete options.

\begin{figure*}
	\centering%
	\input{figs/scal-spectrum}
	\caption{Scalability spectrum for MRLS with $R=36$ and $f=1$ together some references.
	Regions are marked with $D=x$ $D^*=y$, meaning that a MRLS in the region has a leaf to leaf diameter of $x$ and a maximum distance between arbitrary switches of $y$.
	}
	\label{fig:scal-spectrum}
\end{figure*}

\section{Multipass Random Leaf Spine Networks}\label{sec:MRLS}

Among other interesting properties, regular randomly wired direct topologies such as Jellyfish, exhibit lower cost and finer grain scalability than FTs, as described in \cite{Singla} .
In \cite{random18}, the Random Folded Clos network was introduced as an indirect alternative to the Jellyfish.
RFC networks were designed to use Up/Down routing.
This simplifies key aspects such as deadlock avoidance, but introduces two important limitations.
Foremost, it entails a notable restriction on scalability.
This can be waived by adding switch levels, but this leads to unnecessarily large costs.
Secondly, even in those networks that are up/down-connected, there is a performance loss from not considering other routes.

In this work, we propose the \textsl{Multipass Random Leaf Spine} (MRLS) networks.
Our objective is to decouple the physical topology from the routing restrictions, creating a network that retains the cabling simplicity of leaf-spine architectures while unlocking fine-grained scalability and performance.
MRLS networks are conceived for extreme-scale systems.
In these networks, connectedness between certain pairs of leaf switches is guaranteed through multiple up/down routing phases through the network.
Thus, although the network has just 2 levels of switches, its diameter is higher than 2.
We formally introduce these networks in the following definition.

\begin{definition}\label{def:general_LS}
	A \textsl{Multipass Random Leaf-Spine} is a 2-level network which consists of $N_1$ leaf switches and $N_2$ spine switches.
	Each leaf switch has $d$ down links to connect to endpoints and $u$ uplinks connected to varying spine switches.
	Each spine switch has $u+d$ downlinks connected to leaf switches.
	All switches have radix $R=u+d$.
	The interconnections between the two levels of switches are decided via a random process~\cite{Steger}.
\end{definition}

We denote in this work the \emph{thickness} factor as $f=u/d$.
The inverse of this thickness factor is used in other works as the slimming factor~\cite{Navaridas}, which is also known as the \textsl{fitness ratio}, \textsl{contention factor}, or \textsl{blocking ratio}~\cite{slimmedRFC}.
The reasonable thickness factor values for MRLS lie in $1 \leq f\leq 3$, as we show in the next section.

Unlike FTs, RFCs or OFTs, a MRLS network may be built for any value of $N_1$ and $N_2$ following $uN_1=RN_2$.
This is the greatest fine-grain scalability possible for an indirect network.

\subsection{Scalability}

Extreme-scale DCs require topologies capable of expanding from $100{,}000$ to potentially millions of endpoints~\cite{Pilz}.
In this context, we compare MRLS networks against known topologies (FT, OFT, RFC).

In Figure~\ref{fig:scal-spectrum}, we illustrate the expandability spectrum of MRLS networks for a fixed radix $R=36$.
The employed thickness factor is $f=1$.
The figure highlights regions where MRLS networks can achieve specific even diameters $D$ depending on the number of network passes needed to connect a number of endpoints $S$.
The diameter including spines ($D^*$) is also a notable structural property of the topology and is important for the non-minimal routing algorithms required by these topologies.
Hence, their thresholds are included.
Observe that two consecutive regions for $D^*$ are included in a region with a given diameter $D$.
It should be noted that $D$ can be obtained from $D^*$ by $D=2\lfloor\tfrac{D^*}{2}\rfloor$, but not \textit{vice versa}.
The boundaries between regions are actual function plots, indicating the probability for a MRLS instance with the indicated number of endpoints to lie in the left region.
Also, they are quite sharp, with a small proportion of sizes being in reasonable doubt about which side they will fall on.
Technical details can be found in Appendix~\ref{apx:probability}.

Looking at the figure, the first boundary is observed to lie around 2K endpoints with a reasonable chance of having a diameter of either 2 or 4.
Having $D=2$ establishes the limit up to which up/down routing can be employed.
From this limit, multi-pass routing is needed, incrementing the diameter by 2 units for each pass.
The next boundary is around 30K endpoints, in which the diameter does not change, but only $D^*$.
That is, the leaf-to-spine maximum distance changes from 4 to 5.
To the left, the 2-level OFT can be seen, which has parameters $D=2$ and $D^*=3$; and also the 3-level FT, with $D=D^*=4$.
Looking at the far right, it can be observed that even using a small radix such as $R=36$, a MRLS network can connect 100M endpoints with $D=6$, way beyond current practical goals.

The bottom part of Figure~\ref{fig:scal-spectrum} shows the value of the expected average distance $A$ for a MRLS network with a given number of endpoints $S$.
It can be seen to grow slowly, giving $2\leq A\leq 6$ for any practical size.
As $f=1$ is employed, these correspond to values $\tfrac13\leq \Theta\leq 1$ for the capacity limit.

	\begin{figure}
		\centering%
				\begin{tikzpicture}
			\begin{semilogyaxis}[
				xmin=0,xmax=200,
				ymin=1e3,ymax=1.5e6,
				domain=0.1:200,
				enlargelimits=false,
				xmajorgrids=true,
				ymajorgrids=true,
				xminorgrids=true,
				yminorgrids=true,
				extra x ticks=36,
				minor y tick num=1,
				minor grid style={dashed,very thin, color=blue!15},
				major grid style={very thin, color=black!30},
				xlabel={router radix (number of ports)},
				ylabel={total number of compute nodes},
				legend style={at={(0.50,1.01)},anchor=south,font=\scriptsize},legend columns=3,legend cell align=left,
				legend image post style={every path={nomorepostactions}},
			]
				\def\thrcoorA{(28, 1096.478196) (29, 1174.897555) (30, 1288.249552) (31, 1380.384265) (32, 1513.561248) (33, 1621.810097) (34, 1737.800829) (35, 1862.087137) (36, 1995.262315) (37, 2137.962090) (38, 2290.867653) (39, 2398.832919) (40, 2570.395783) (41, 2754.228703) (42, 2884.031503) (43, 3090.295433) (44, 3235.936569) (45, 3467.368505) (46, 3630.780548) (47, 3801.893963) (48, 4073.802778) (49, 4265.795188) (50, 4466.835922) (51, 4677.351413) (52, 4897.788194) (53, 5128.613840) (54, 5370.317964) (55, 5623.413252) (56, 5888.436554) (57, 6165.950019) (58, 6456.542290) (59, 6760.829754) (60, 7079.457844) (61, 7413.102413) (62, 7762.471166) (63, 7943.282347) (64, 8317.637711) (65, 8709.635900) (66, 8912.509381) (67, 9332.543008) (68, 9772.372210) (69, 10000.) (70, 10471.28548) (71, 10715.19305) (72, 11220.18454) (73, 11481.53621) (74, 12022.64435) (75, 12302.68771) (76, 12882.49552) (77, 13182.56739) (78, 13803.84265) (79, 14125.37545) (80, 14791.08388) (81, 15135.61248) (82, 15488.16619) (83, 16218.10097) (84, 16595.86907) (85, 16982.43652) (86, 17782.79410) (87, 18197.00859) (88, 18620.87137) (89, 19054.60718) (90, 19952.62315) (91, 20417.37945) (92, 20892.96131) (93, 21379.62090) (94, 22387.21139) (95, 22908.67653) (96, 23442.28815) (97, 23988.32919) (98, 24547.08916) (99, 25118.86432) (100, 26302.67992) (101, 26915.34804) (102, 27542.28703) (103, 28183.82931) (104, 28840.31503) (105, 29512.09227) (106, 30199.51720) (107, 30902.95433) (108, 31622.77660) (109, 32359.36569) (110, 33113.11215) (111, 33884.41561) (112, 34673.68505) (113, 35481.33892) (114, 36307.80548) (115, 37153.52291) (116, 38018.93963) (117, 38904.51450) (118, 39810.71706) (119, 40738.02778) (120, 41686.93835) (121, 42657.95188) (122, 43651.58322) (123, 44668.35921) (124, 45708.81896) (125, 46773.51413) (126, 47863.00923) (127, 48977.88194) (129, 50118.72336) (130, 51286.13840) (131, 52480.74602) (132, 53703.17964) (133, 54954.08739) (134, 56234.13252) (136, 57543.99373) (137, 58884.36554) (138, 60255.95861) (139, 61659.50019) (140, 63095.73445) (142, 64565.42290) (143, 66069.34480) (144, 67608.29754) (145, 69183.09709) (147, 70794.57844) (148, 72443.59601) (149, 74131.02413) (150, 75857.75750) (152, 77624.71166) (153, 79432.82347) (154, 81283.05162) (156, 83176.37711) (157, 85113.80382) (159, 87096.35900) (160, 89125.09381) (161, 91201.08394) (163, 93325.43008) (164, 95499.25860) (166, 97723.72210) (167, 100000.) (168, 102329.2992) (170, 104712.8548) (171, 107151.9305) (173, 109647.8196) (174, 112201.8454) (176, 114815.3621) (177, 117489.7555) (179, 120226.4435) (181, 123026.8771) (182, 125892.5412) (184, 128824.9552) (185, 131825.6739) (187, 134896.2883) (188, 138038.4265) (190, 141253.7545) (192, 144543.9771) (193, 147910.8388) (195, 151356.1248) (197, 154881.6619) (198, 158489.3192) (200, 162181.0097)};
				\def\thrcoorB{(129, 3.235936569*10^6) (128, 3.162277660*10^6) (127, 3.090295433*10^6) (126, 2.951209227*10^6) (125, 2.884031503*10^6) (124, 2.818382931*10^6) (123, 2.691534804*10^6) (122, 2.630267992*10^6) (121, 2.570395783*10^6) (120, 2.454708916*10^6) (119, 2.398832919*10^6) (118, 2.344228815*10^6) (117, 2.238721139*10^6) (116, 2.187761624*10^6) (115, 2.089296131*10^6) (114, 2.041737945*10^6) (113, 1.995262315*10^6) (112, 1.905460718*10^6) (111, 1.862087137*10^6) (110, 1.778279410*10^6) (109, 1.737800829*10^6) (108, 1.659586907*10^6) (107, 1.621810097*10^6) (106, 1.548816619*10^6) (105, 1.513561248*10^6) (104, 1.445439771*10^6) (103, 1.380384265*10^6) (102, 1.348962883*10^6) (101, 1.288249552*10^6) (100, 1.258925412*10^6) (99, 1.202264435*10^6) (98, 1.148153621*10^6) (97, 1.122018454*10^6) (96, 1.071519305*10^6) (95, 1.023292992*10^6) (94, 1.000000*10^6) (93, 954992.5860) (92, 912010.8394) (91, 870963.5900) (90, 851138.0382) (89, 812830.5162) (88, 776247.1166) (87, 741310.2413) (86, 707945.7844) (85, 676082.9754) (84, 645654.2290) (83, 616595.0019) (82, 588843.6554) (81, 562341.3252) (80, 537031.7964) (79, 512861.3840) (78, 489778.8194) (77, 467735.1413) (76, 446683.5922) (75, 426579.5188) (74, 407380.2778) (73, 389045.1450) (72, 363078.0548) (71, 346736.8505) (70, 331131.1215) (69, 316227.7660) (68, 295120.9227) (67, 281838.2931) (66, 263026.7992) (65, 251188.6432) (64, 234422.8815) (63, 223872.1139) (62, 208929.6131) (61, 199526.2315) (60, 186208.7137) (59, 173780.0829) (58, 165958.6907) (57, 154881.6619) (56, 144543.9771) (55, 134896.2883) (54, 125892.5412) (53, 117489.7555) (52, 109647.8196) (51, 102329.2992) (50, 95499.25860) (49, 87096.35900) (48, 81283.05162) (47, 75857.75750) (46, 69183.09709) (45, 64565.42290) (44, 58884.36554) (43, 54954.08739) (42, 50118.72336) (41, 45708.81896) (40, 41686.93835) (39, 38018.93963) (38, 34673.68505) (37, 31622.77660) (36, 28183.82931) (35, 25703.95783) (34, 22908.67653) (33, 20417.37945) (32, 18620.87137) (31, 16595.86907) (30, 14454.39771) (29, 12882.49552) (28, 11481.53621) (27, 10000.) (26, 8709.635900) (25, 7585.775750) (24, 6456.542290) (23, 5623.413252) (22, 4786.300923) (21, 4073.802778) (20, 3388.441561) (19, 2818.382931) (18, 2344.228815) (17, 1905.460718) (16, 1548.816619) (15, 1230.268771)};
				\def\thrcoorC{(11,1348.963) (12, 2041.737945) (13, 2951.209227) (14, 4168.693835) (15, 5754.399373) (16, 7762.471166) (17, 10232.92992) (18, 13489.62883) (19, 17378.00829) (20, 22387.21139) (21, 28183.82931) (22, 34673.68505) (23, 42657.95188) (24, 52480.74602) (25, 63095.73445) (26, 75857.75750) (27, 91201.08394) (28, 109647.8196) (29, 128824.9552) (30, 151356.1248) (31, 173780.0829) (32, 204173.7945) (33, 234422.8815) (34, 269153.4804) (35, 309029.5433) (36, 354813.3892) (37, 407380.2778) (38, 457088.1896) (39, 524807.4603) (40, 588843.6554) (41, 660693.4480) (42, 741310.2413) (43, 831763.7711) (44, 912010.8394) (45, 1.023292992*10^6) (46, 1.122018454*10^6) (47, 1.258925412*10^6) (48, 1.380384265*10^6) (49, 1.513561248*10^6) (50, 1.698243652*10^6) (51, 1.862087137*10^6) (52, 2.041737945*10^6) (53, 2.238721139*10^6) (54, 2.398832919*10^6) (55, 2.630267992*10^6) (56, 2.884031503*10^6) (57, 3.162277660*10^6)}
				\begin{scope}[on background layer]
				\fill[region 2] plot coordinates \thrcoorA -- (200,1000) --cycle;
				\fill[region 3] plot coordinates \thrcoorA -- (200,2e6) -- plot coordinates \thrcoorB --cycle;
				\fill[region 4] plot coordinates \thrcoorC -- plot coordinates \thrcoorB --cycle;
				\end{scope}
				\addlegendimage{area legend,fill=region 2}\addlegendentry{MRLS $D{=}2$}
				\addlegendimage{area legend,fill=region 3}\addlegendentry{MRLS $D{=}D^*{=}4$}
				\addlegendimage{area legend,fill=region 4}\addlegendentry{MRLS $D^*{=}5$}
				\addplot+[blue,variable=\k] ({2*k},{2*k^2}); \addlegendentry{2-level FT}
				\addplot+[blue,mark options={blue},variable=\k] ({2*k},{2*k^3}); \addlegendentry{3-level FT}
				\addplot+[brown,mark options={brown,thick},mark=star,mark size=4,variable=\q] ({2*(q+1)},{2*(q+1)*(q^2+q+1)}); \addlegendentry{2-level OFT}
			\end{semilogyaxis}
		\end{tikzpicture}
		\caption{Scalability of low-diameter multistage topologies.}
		\label{fig:scalability}
	\end{figure}

Figure~\ref{fig:scalability} provides another scalability comparison, showing the radix required to connect a given number of endpoints with each topology configuration. Both Figure~\ref{fig:scal-spectrum} and \ref{fig:scalability} have their MRLS networks normalized by $f=1$.
Observe that it is possible to normalize in another way, let us say, with $\Theta=1$ by changing $f$.

\subsection{Cost}

In a MRLS network, the cost in links per endpoint is the thickness factor $f$, as it determines how many links are used to connect endpoints versus how many are used for inter-switch connectivity:
\begin{equation}\label{eq:cost_fitness}
	\costlinks(\mathrm{MRLS}) = f = \frac{u}{d} = \frac{M}{S} = \frac{\Theta \cdot A}{2}
\end{equation}

Although expanding the network by adding more endpoints to the MRLS keeps the cost ratios constant, it increases the average distance $A$ and thus reduces the capacity limit $\Theta$.
So, in order to achieve a fixed performance when expanding the network, as $A$ increases, the thickness factor $f$ can be increased as well.
This means that for a leaf switch the number of uplinks $u$ increases while the number of downlinks $d$ to endpoints is reduced.

    In general, an MRLS network has the following advantages:
\begin{itemize}
	\item The cost is easily tunable. The thickness factor $f$ can be adjusted to meet specific cost/performance targets. In Figure~\ref{fig:scal-spectrum} the cost is 1 link per endpoint as $f=1$ but the network can connect any desired of number of endpoints.
	\item Random wiring provides an almost optimal average distance for any given number of switches and links, minimizing the cost for a target performance~\cite{Bollobas}.
\end{itemize}

As stated above, the thickness factor $f$ can be used to tune the performance of the network even beyond what $A$ requires to be balanced in Equation~\ref{eq:general}.
In typical indirect topologies, Up/Down routing is assumed, leading to minimal routes. In this case, the average path length of packets usually equals the topological average distance $A$.
However, routes in MRLS networks are more similar to those of direct networks, and the average path length of packets can be longer than $A$ depending on the selected routing algorithm.
This raises the question of whether to dimension the network for non-uniform traffic by replacing the value of $A$ in Equation~\ref{eq:general} with the longer effective path lengths of the routes in such a scenario.
Thus, if we set $A$ to the average path length of packets under the most adverse permutation traffic and obtain $M$ for $\Theta=1$, we can build a network that behaves similarly to a non-blocking FT for such a permutation.

	\subsection{Multipass routing in leaf-spine networks}

	Unlike FTs, OFTs, or RFCs, MRLS networks are not designed for Up/Down routing, and thus, they require a more general algorithm that can leverage the path diversity of the topology.
	Since a MRLS network is connected randomly, its routing algorithm must be topology-agnostic.
	However, its bipartite indirect structure allows for a significant simplification of the routing logic, compared to direct networks:
	\begin{itemize}
		\item All routes are symmetrical and follow the same $[Up-Down]^*$ structure.
		\item Deadlock avoidance techniques require half the resources compared to other topology agnostic routing algorithms for direct networks: Ladder or Virtual Ordered Classes~\cite{principles} require 1VC per hop, while MRLS requires 1VC per two hops (Up-Down phase), halving the resources needed.
		\item The maximum hop bound of a practical routing algorithm can be reduced to $2D$.
	\end{itemize}
	Two approaches are considered: $K$-Shortest Paths (KSP) algorithm~\cite{ksp}, and Polarized routing~\cite{polarizado}.
	In both routing mechanisms, a penalty is applied to non-minimal routes so that minimal routes are preferred.

	\subsubsection{K-Shortest Paths (KSP)}
	In KSP routing, multiple shortest paths are precomputed for each source–destination pair.
	Randomization among paths of equal length proves to be important to balance traffic \cite{ksp}, while an adaptive selection mechanism allows packets to choose their final route dynamically after the first hop.

	Experiments show improvements when increasing the value of $K$, the number of shortest paths included in the routing table for each source/destination pair. For the simulation, $K=250$ is employed, though further improvement is still possible. Even tables with $K=8$ (cf.~\cite{Singla}) pose problems to use them in actual switches~\cite{Amazon}. This optimistic value is used for a better comparison with Polarized routing.

	\subsubsection{Polarized routing}

	Polarized routing is an adaptive, non-minimal algorithm conceived as a topology-agnostic solution for direct networks \cite{polarizado}.
	Next, it is adapted to indirect random networks, showing great advantages when using it in particular topologies such as MRLS.

	Polarized routing proceeds hop-by-hop by evaluating candidate links based on a very simple logic.
	Any switch $c$ routing a packet from source $s$ to destination $t$ tracks its distance to both endpoints: $d(c,s)$ and $d(c,t)$.
	Since $s$ and $t$ are leaf switches in a leaf-spine network, both distances have the same parity and vary by exactly $\pm 1$ at every hop.

	This property allows us to map the routing decisions into just four possible link categories to evaluate each possible next link, making it very efficient.
	For any switch $n$ as a next hop candidate from switch $c$, the tuple $(d(n,s) - d(c,s), d(n,t) - d(c,t))$ can only have 4 values:

	\begin{itemize}
		\item \textbf{Forward $(+1, -1)$:} The link moves the packet further from the source and closer to the destination. These are minimal routes and are always allowed.
		\item \textbf{Expansion $(+1, +1)$:} The link moves the packet further from both source and destination. This is a non-minimal deroute, only allowed at the beginning of the path, while $d(c,s) < d(c,t)$.
		\item \textbf{Contraction $(-1, -1)$:} The link moves the packet closer to both endpoints. This deroute is only allowed at the end of the path, once $d(c,s) \geq d(c,t)$.
		\item \textbf{Backtrack $(-1, +1)$:} The link moves the packet closer to the source and further from the destination. These are never allowed.
	\end{itemize}

	To implement this, the exact distance values are not required to be carried in the packet header. Instead, a simple 2-bit label per port is sufficient to classify the links into the previous four categories. The routing table is queried twice (indexed by source and destination) to retrieve a bit that indicates if the neighbor is closer ($1$) or further ($0$). This 2-bit combination directly yields the label, and the algorithm simply filters the allowed labels based on the current location of the packet. Among the permitted links, the routing selects the one with the lowest occupancy, penalizing deroutes (Expansion and Contraction) to prioritize minimal paths.

	The simplicity of Polarized rules does not, a priori, guarantee a valid neighbor of $c$ to continue a route for every $c$, $s$, and $t$.
	The switch $c$ is denoted as a \textsl{corner} relative to $s$ and $t$ if there is no neighbor of $c$ satisfying the changes in the distance tuples allowed by the algorithm.
	To employ it successfully, the absence of such corners is required.
	In the case of random topologies, such as MRLS, the probability of having a corner is negligible in practical cases.
	Nevertheless, all pairs $(s,c)$ can be checked, and the MRLS can be re-rolled if any corner is found.

	Polarized routing also benefits from the leaf-spine structure by selecting routes with low maximum length, the same as in Valiant routing~\cite{valiant} and other non-minimal mechanisms.
	It is proved in Theorem~\ref{theo:polarizado} that the maximum length of the routes is upper bounded by twice the diameter $2D$.%

	\begin{theorem}\label{theo:polarizado}
		The routes provided by Polarized routing in a leaf-spine network have a length $k$ satisfying
		\begin{equation*}
			k \leq 2D^* -2 \leq 2D.
		\end{equation*}
	\end{theorem}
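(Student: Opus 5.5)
The plan is to find a quantity that is monotone along every Polarized route. Using it, I would split the route into an ``expansion'' phase and a ``contraction'' phase. Then I would show that the total length is $d(x,s)+d(x,t)$ for a well-chosen transition switch $x$.

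\textbf{Step 1 (monotone potential).} For a switch $c$ on the route, let $\phi(c)=d(c,s)-d(c,t)$. By the four link categories, a Forward hop changes $\phi$ by $+2$, and Expansion and Contraction hops leave it unchanged. Backtrack hops are forbidden. So $\phi$ is non-decreasing along the route. Hence the route splits into an initial segment of switches with $d(c,s)<d(c,t)$ (phase 1) and a final segment with $d(c,s)\ge d(c,t)$ (phase 2). This split is forced because, by the rules, Expansion is only taken in phase 1 and Contraction only in phase 2. Phase 2 is nonempty, since the route ends at $t$, where $\phi(t)=d(t,s)\ge 0$. If $s=t$ then $k=0$ and there is nothing to prove, so assume $s\neq t$. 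Then $\phi(s)<0$ and phase 1 is nonempty too.

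\textbf{Step 2 (counting hops).} Every hop leaving a phase-1 switch is Forward or Expansion, so it increases $d(\cdot,s)$ by exactly one. Every hop inside phase 2 is Forward or Contraction, so it decreases $d(\cdot,t)$ by exactly one. Let $y$ be the last phase-1 switch and $x$ the first phase-2 switch. Then the number of hops from $s$ to $x$ is $d(x,s)$, and the number from $x$ to $t$ is $d(x,t)$. So $k=d(x,s)+d(x,t)$.

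\textbf{Step 3 (the transition hop; the key step).} Here I use the leaf-spine parity noted above: $s,t$ are leaves, so $d(c,s)\equiv d(c,t)\pmod 2$ for every $c$.
\begin{itemize}
\item The hop $y\to x$ cannot be an Expansion. An Expansion keeps $\phi$ unchanged, which would give $\phi(x)=\phi(y)<0$, contradicting $x$ in phase 2. So the hop is Forward.
\item Parity and $\phi(y)<0$ give $d(y,s)\le d(y,t)-2$. After a Forward hop, $d(x,s)=d(y,s)+1\le d(y,t)-1=d(x,t)$.
\item Combined with $\phi(x)\ge0$, this forces $d(x,s)=d(x,t)=d(y,t)-1$.
\end{itemize}
Therefore
\begin{equation*}
k=2d(y,t)-2\le 2D^*-2,
\end{equation*}
since $d(y,t)$ is a distance between an arbitrary switch and a leaf, and so is at most $D^*$.

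\textbf{Step 4 (second inequality).} Finally, $D=2\lfloor D^*/2\rfloor\ge D^*-1$, which gives $2D^*-2\le 2D$.

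The main obstacle is Step 3: seeing that the transition must occur via a Forward hop at a switch equidistant from $s$ and $t$. Without the parity argument, one only gets the weaker bound $2D^*$.
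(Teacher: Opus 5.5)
Your proof is correct and follows the paper's argument. Your $x$ and $y$ are the paper's middle switch $m$ and its predecessor $p$; in both proofs parity forces $d(x,s)=d(x,t)=l$ with the predecessor at distances $(l-1,l+1)$, so $k=2l\le 2D^*-2$, and your potential $\phi=d(\cdot,s)-d(\cdot,t)$ makes explicit what the paper uses only implicitly (that the route never leaves the phase $d(\cdot,s)\ge d(\cdot,t)$ once it enters it), while your Step 4 rests on the relation $D=2\lfloor D^*/2\rfloor$ that the paper asserts in Section~4.1 but does not spell out in its proof.
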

	\begin{proof}
	Consider a Polarized route from source $s$ to destination $t$, both leaf switches.
	In a network with levels, there are no links that lead to another switch at the same distance. This entails that for a current switch $c$, any Polarized hop will move away from $s$ when $d(s,c)<d(c,t)$ and toward $t$ otherwise.

	Denote by $m$ the first switch in the route with $d(s,m)\geq d(m,t)$, that is, the \textsl{middle} switch.
	It is clear that $d(s,m)-d(m,t)\leq 1$, as it cannot change by more than 2 at each hop. Furthermore, as $s$ and $t$ are both leaf switches, $d(s,m)\equiv d(m,t)\pmod 2$, leading to $d(s,m)=d(m,t)$. Call $l=d(s,m)$.
	Let $p$ be the \textsl{previous} switch to $m$ in the route, it has $d(s,p)=l\pm 1$ and $d(p,t)=l\pm 1$.
	If it were $d(s,p)=l+1$ then $d(s,p)=l+1\geq d(p,t)=l\pm 1$, contradicting $m$ being the first such switch.
	The same contradiction happens if $d(p,t)=l-1$.

	Thus, $d(s,p)=l-1$ and $d(p,t)=l+1$, implying $l+1\leq D^*$.
	By the preceding argument, the route is minimal from $s$ to $p$ and from $m$ to $t$. This yields a length of $k=(l-1)+1+l=2l\leq 2(D^*-1)$.
	\end{proof}

\begin{table*}
	\centering
	\small
	\caption{Parameters of the evaluated topologies.}
	\label{tabla:topologias}
	\begin{tabular}{|l|l|l|l|l|r|c|p{2cm}|l|}
		\hline
		\textbf{Topology(R,S)}	& Notes & \textbf{\costlinks}	& \textbf{\costswitches} 	& \textbf{Routing}	& \textbf{Endpoints} 	& \textbf{Diameter} & \textbf{Max Hops (99\% / bound)} & \raisebox{-0.3em}{$\Theta$} 	\\ \hline
		MRLS(36, 11052)	& $u=18$		& 1 					& 0.083						& Pol and KSP 		& 11052 			& 4 				& 4 / 6 & 	0.748\\ \hline
		MRLS(36, 11052)	& $u=21$		& 1.4 					& 0.106						& Pol 		& 11052 			& 4 				& 4 / 6 & 1.029\\ \hline
		MRLS(36, 11664)	& $u=24$		& 2 					& 0.139						& Pol 				& 11664 			& 4					& 4 / 6 	& 1.420\\ \hline
		MRLS(36, 104976)	& $u=18$ 	& 1						& 0.083 					& Pol 				& 104976			& 	4	& 6 / 8	& 0.527 \\ \hline
		MRLS(36, 104976)	& $u=24$ 	& 2						& 0.139 					& Pol 				& 104976			& 	4	& 6 / 8	& 1.048 \\ \hline
		MRLS(36, 104976)	& $u=27$ 	& 3						& 0.194						& Pol 				& 104976			& 	4	& 6 / 8	& 1.561 \\ \hline
		MRLS(32, 16640)		& $u=19$ 	& 1.462					& 0.122						& Pol 				& 16640				& 4 				& 4 / 6 	& 0.900\\ \hline
		\hline
		OFT(36, 11052)	& $q=17$		& 1 					& 0.083						& Pol 				& 11052 			& 2 				& 4 	& 1	\\ \hline
		FT(36, 11664)	& $h=2$			& 2 					& 0.139						& MIN 				& 11664 			& 4					& 4		& 1	\\ \hline
		FT(36,  104976)	& 50\% pop.		& 3						& 0.222						& MIN 				& 104976			& 6					& 6	& 1\\ \hline
		\hline
		DF+(32, 16640)		& 65 groups & 1.5					& 0.127						& FPAR 				& 16640				& 3 				& 6 	& 1	\\ \hline
		DF(32, 16512) 	& 129 groups	& 1.5					& 0.125						& UGAL				& 16512				& 3 				& 6 & 1		\\ \hline
	\end{tabular}
\end{table*}
	\section{Experimental Setup}\label{sec:setup}

	To compare the performance of the different topologies we employ CAMINOS~\cite{CAMINOS}, an event-driven network simulator that operates at the flit level and accurately models the microarchitecture of network switches.
	The switch operates with a $2\times$ speedup and a random allocator, and each packet consists of 16 flits.
	It provides buffer space for 8 packets per virtual channel (VC) at the input ports and 4 packets per VC at the output ports.
	The MRLS networks are configured with the minimum number of VCs required to guarantee deadlock freedom—specifically, 3 or 4 VCs for maximum hop bounds of 6 or 8, respectively.
	In each comparison, the baseline topologies are initially provisioned with an equivalent number of VCs.
	However, because MRLS routing rarely reaches its maximum hop bound in practice, the competing topologies are generally allocated one fewer VC during the majority of the simulations.
	The exception to this configuration are the Dragonfly topologies, which are consistently equipped with 4 VCs~\footnote{Resources to fully reproduce our experiments, including the experimental setup and the MRLS topology files, are available at \url{https://github.com/alexcano98/MRLS-topology-reproducibility.git}}.

	\input{simulaciones-group/figura_11k}

	\subsection{Evaluated Networks}
	We evaluate the MRLS topology against two distinct families of network architectures considering three different number of servers:
		\begin{itemize}
			\item \textbf{Indirect Networks (FT and OFT):} Using a switch radix of 36, we evaluated an 11K-endpoint scale (both FT and OFT) and a 100K-endpoint scale (using a 50\% depopulated 4-level FT).
    		\item \textbf{Dragonfly Networks (DF and DF+~\cite{dragonfly_plus}):} Using a switch radix of 32, we evaluated a 16K-endpoint scale (full-sized DF, and DF+ with a global trunking of 4).
		\end{itemize}

	At each scale, one or more MRLS configurations are adapted to match the number of endpoints of the compared topologies, while also ensuring a similar or lower cost in terms of links per endpoint.
	State-of-the-art adaptive routing algorithms were utilized for all topologies.
	Polarized routing performs exceptionally well on MRLS; however, it is less effective (or requires extensive modifications) for DF and DF+.
	Thus, we employ the standard UGAL and FPAR mechanisms for the Dragonfly networks.
	Additionally, depending on the technology used, DF networks may achieve lower real-world costs by utilizing cheaper electrical cables for short links, a factor ignored in this simplified cost model defined in Equation~(\ref{eq:cost}).
	The details of the simulated topologies are presented in Table~\ref{tabla:topologias}.

	\subsection{Traffic scenarios}

	Evaluations were conducted under three distinct synthetic traffic models:

		\begin{enumerate}
		\item \textbf{Throughput (Max Injection Rate):} Uses Bernoulli equal-sized messages across four traffic patterns: Uniform (UN), Random Endpoint Permutation (REP), Random Switch Permutation (RSP---a highly adverse pattern), and Bipartite Uniform (BU), a pattern which models typical communication in a DC divided into two halves: one containing caches or directories, and the other accessing uniform data within them.

		\item \textbf{Tail Latency (Mice and Elephant Flows):} Measures up to the 99.99th percentile packet latency of uniform traffic at a load of 0.5 flits/cycle. Other load values have been evaluated as well, showing similar trends, but we omit them for brevity. To model real-world environments, traffic is split into mice flows (1-packet messages; 90\% of total messages, 10\% of volume) and elephant flows (16-packet messages; 10\% of total messages, 90\% of volume).

		\item \textbf{AI/HPC Collectives:} Evaluates All2All and Allreduce operations:
		\begin{itemize}
			\item \textbf{All2All:} Tasks are scaled to the minimum endpoints available per scale (11,052; 16,512; and 104,976 tasks).
   			\item \textbf{Allreduce:} Uses Rabenseifner's algorithm~\cite{thakur2005optimization}, which requires power-of-two task counts (8,192; 16,384; and 65,536 tasks). This scenario inherently favors structured topologies like FT, whose hypercube mapping facilitates the some locality characteristic of the algorithm. Additionally, message sizes vary between steps, and a substantial portion of the traffic is exchanged between endpoints connected to the same switch.
		\end{itemize}
		\end{enumerate}

	\section{Experimental Results}\label{sec:results}

	\input{simulaciones-group/figura_100k}

	\input{simulaciones-group/figura_directa}

	\subsection{Networks of 11K endpoints}
	In this subsection, we evaluate one OFT, one FT, and three MRLS networks from Table~\ref{tabla:topologias}, all of them comprising approximately 11K endpoints.
	Additionally, an MRLS configuration using KSP routing is evaluated to provide a comparison against Polarized routing.
	Results for the 11K-endpoint networks are presented in Figure~\ref{fig:routing_eval}.

	\subsubsection{Throughput evaluation}
	It can be observed that for the same cost, the MRLS topologies yield very similar performance to the OFT.
	While KSP routing in MRLS achieves similar performance to Polarized routing under most traffic patterns, its throughput drops by 26\% under the more challenging RSP traffic.
	Consequently, the remainder of our evaluation focuses on Polarized routing for MRLS due to its higher efficiency and robustness.

	For the more expensive configurations (MRLS with costs of 1.4 and 2.0, alongside the FT), throughput for UN and BU traffic is nearly identical, though MRLS holds a marginal advantage.
	Under REP traffic, the cost-2 MRLS outperforms the FT by 11\%, and the cost-1.4 MRLS by 24\%.
	Similarly, for the more challenging RSP traffic, the cost-2 MRLS outperforms the FT by 6\%, and the cost-1.4 MRLS by 40\%.

	The FT is a full-bisection bandwidth network with $\Theta=1$, which theoretically allows it to deliver maximum throughput across all traffic patterns.
	While the analogous MRLS lacks a formal proof of being rearrangeably non-blocking, its higher capacity ($\Theta$) reduces blocking across the fabric in practice, yielding an average speedup of 7.1\%.
	Notably, doubling the cost does not yield proportional throughput gains for UN and BU patterns, indicating that the network becomes over-provisioned for these workloads.

	\subsubsection{Mice and elephant flows evaluation}
	Regarding tail latency, the OFT and its equivalent MRLS perform nearly identically.
	However, the gap between the FT and its comparable MRLS widens considerably, with the FT exhibiting a roughly 30\% increase in packet latency.

	\subsubsection{Collectives evaluation}
	For both collectives, the comparable MRLS and OFT perform similarly, exhibiting only a 2\%--3\% difference in completion time.
	Conversely, the cost-2 MRLS outperforms the FT by 17\% in the All2All operation, whereas the FT outperforms MRLS by 10\% in the Allreduce.
	The FT's superiority in Allreduce stems from the strong locality of this communication pattern, which inherently benefits from the FT's hierarchical structure and its higher number of endpoints per switch.
	Furthermore, the cost-1.4 MRLS matches the All2All performance of the FT while reducing the cost by 30\%, although it remains 10\% slower in the Allreduce operation.

	\subsection{Networks of 100K endpoints}
	In this subsection, we compare one FT against three MRLS configurations from Table~\ref{tabla:topologias}, all scaled to 100K endpoints.
	The evaluation is presented in Figure~\ref{fig:throughput100k}.

	\subsubsection{Throughput evaluation}
	It can be observed that for the same cost, MRLS yields 4\%--7\% higher throughput than the FT.
	In this case, $\Theta(\text{MRLS})=1.561 > \Theta(\text{FT})=1$, suggesting that with ideal forwarding, a maximum throughput of 1.0 could be achieved.
	In practice, sporadic contention occurs, but it is less frequent in MRLS due to its higher capacity.

	Nevertheless, MRLS is over-dimensioned at $f=3$; employing $f=2$ yields comparable performance for most traffic.
	Specifically, this 33\% reduction in cost incurs only a 4\% penalty in Uniform throughput, though the performance penalty reaches 27\% under the more demanding switch permutation traffic.
	Below a cost of 2, MRLS throughput scales proportionally with cost across all traffic patterns (cf. \cite{Kassing}).
	Indeed, reducing the cost by 50\% (from $f=2$ to $f=1$) results in a throughput ratio between 0.48 and 0.52.

		\subsubsection{Mice and elephant flows evaluation}
	Tail latency metrics indicate that the MRLS with $f=1$ performs considerably worse than the other configurations, as it becomes saturated under the evaluated load.
	The remaining topologies exhibit very similar latencies, with the $f=3$ MRLS showing a slight advantage.

		\subsubsection{Collectives evaluation}
	The collective performance trends at 100K endpoints closely mirror those observed at 11K.
	MRLS is significantly faster for the All2All operation (by 48\%), whereas the FT excels in the Allreduce operation (by 20\%).
	This highlights that the extra capacity of MRLS does not strictly guarantee faster execution times, especially for highly structured traffic patterns that benefits other networks.
	Nevertheless, the proportionality of throughput to cost holds true in the All2All, providing superior performance per cost when the thickness factor $f$ is adequately chosen.

	\subsection{Comparison with Dragonfly networks}
	In this subsection, the performance of MRLS is compared against DF and DF+ topologies.
	The results for these networks are displayed in Figure~\ref{fig:throughput-directas}.
	The DF networks feature a diameter of 3 and a small average distance, reflecting a cost of $\costlinks \approx 1.5$.
	They are compared against an MRLS configuration with an equivalent cost ($f=\costlinks=1.46$).

	In principle, this is a disadvantageous scenario for MRLS, as its routing paths cannot have an odd integer length.
	When adjusting the MRLS to Dragonfly sizes, this limitation is reflected in its relatively low $\Theta=0.900$.

	\subsubsection{Throughput evaluation}
	The MRLS outperforms both Dragonfly variants across all traffic patterns while maintaining a similar links-per-endpoint ratio.
	For Uniform traffic, it yields 14\% and 19\% higher throughput than DF+ and DF, respectively.
	These performance gaps widen further under the other traffic patterns.

	The Bipartite Uniform (BU) traffic pattern is particularly noteworthy.
	While all indirect networks maintained similar performance for both UN and BU traffic, Dragonfly networks struggle here.
	In a BU scenario, where half the racks act as clients and the other half as caches, relying on minimal routes severely limits the number of usable global links in a Dragonfly topology.
	To exploit the full global capacity, these networks must use longer, non-minimal routes, a requirement that does not arise under Uniform traffic.
	This global capacity bottleneck affects all intra-group traffic.
	As a result, even a balanced network with $\Theta=1$ suffers severe performance degradation when traffic is not perfectly distributed across global links.
	The same effect occurs in RSP and REP, the latter to a lesser extent.
	This demonstrates that Dragonfly networks are highly sensitive to the specifics of process allocation; therefore, an optimal application placement would improve their performance on this benchmark.%

	\subsubsection{Mice and elephant flows evaluation}
	The tail latency evaluation shows that MRLS outperforms standard DF, while DF+ exhibits slightly better tail latency than MRLS.

	\subsubsection{Collectives evaluation}
	For the collective operations, MRLS demonstrates vastly superior performance.
	In the All2All operation, MRLS is approximately 100\% faster than both Dragonfly variants.
	For the Allreduce operation, MRLS is 50\% faster than DF and 12\% faster than DF+.
	While the direct nature of Dragonfly networks assists them in Allreduce, MRLS remains overwhelmingly superior for global All2All communication.
	This is once again caused by the inherent global bottleneck of Dragonfly topologies.

	\section{Conclusions}\label{sec:conclusions}

	Due to its high cost, alternatives to the Fat-Tree must be considered to achieve extreme scalability.
	Networks such as OFT have demonstrated their ability to interconnect a huge number of endpoints.
	However, they are tied to very specific sizes and their capacity for expansion is extremely limited.
	In contrast, MRLS networks are expandable and can accommodate virtually any number of endpoints.
	They enable a trade-off between scalability and performance, interconnecting a massive number of nodes without significant degradation.

	Furthermore, MRLS networks (and OFT) require routing algorithms that support non-minimal paths to handle adverse traffic efficiently.
	Challenges like its physical layout and implementation remain and merit further research for adoption in large-scale DCs.
	The MRLS achieves a 50\% speedup against a Fat-Tree for an All2All collective comprising 100k endpoints, and 100\% against Dragonfly networks for the same collective.

	\clearpage%
	\bibliography{main}
	\bibliographystyle{IEEEtran}

\appendix

\section{Average Distance and Probabilities}\label{apx:probability}
This appendix explains the mathematical details behind the graphs shown in Figures~\ref{fig:scal-spectrum} and \ref{fig:scalability}. Most of Figure~\ref{fig:scal-spectrum} can be estimated experimentally by generating many networks near the thresholds and measuring the probability of obtaining a given $D^*$. Indeed, we have also done that for the first three thresholds. Deriving analytical expressions that accurately match these measurements makes it possible to extrapolate to much larger network sizes. The same expressions also enable the construction of Figure~\ref{fig:scalability}, which would be otherwise unfeasible due to the vast amount of topologies to generate.

The main issue at hand is an accurate estimation of the distance distribution. Afterwards, the average distances $A$ and $A^*$ can be computed. Furthermore, additional details are provided to obtain the probability that separates the regions corresponding to different values of $D^*$. Our focus is on accurate predictions for realistic network sizes, where asymptotic approximations such as those in \cite{rfc} perform poorly. In Figure~\ref{fig:scal-spectrum}, such approximations could lead to important deviations. In contrast, the approximations presented here match the experimental data precisely.

For statistical notation, we denote the probability of an event $E$ by $P[E]$, the conditional probability given condition $C$ by $P[E|C]$, the expected value of a random variable $V$ by $E[V]$, and the conditional expectation of the random variable $V$ given a predicate $C$ by $E[V|C]$.

\subsection{Distance Distribution}\label{sub:distance-distribution}

Denote by $\mathcal S_r(c)$ the sphere of switches at distance exactly $r$ from a switch $c$, the center. The cardinalities of these spheres form the distance distribution. As we are dealing with random topologies, the sequence varies depending on the chosen center. Nevertheless, from a probabilistic point of view there are only two sequences ($i=1,2$):
$n^i_r=E[|\mathcal S_r(c) |\mid \text{$c$ is a switch at level $i$}]$.
The sequence $n^1_r$ is enough to compute $A$, both are required to determine the $D^*$ regions.
We employ $i=1,2$ for coherence with Table~\ref{tbl:notation}. At some point we employ arithmetic modulo 2, which must be understood as $i+r=1+\rem(1+i+r,2)=j$, the value $j=1,2$ with $i+r\equiv j \pmod 2$.

We also employ, for a switch $c$ at level $i$, the ball $\mathcal B_r(c)$ containing those switches at the level $r+i$ that are at distance $r$ or lower from $c$. Define the analogous sequences $b_r^i=E[|\mathcal B_r(c)|\text{$c$ is a switch at level $i$}]$. Note that $n_r^i\leq b_r^i\leq N_{r+i}$ and, for $r\geq 2$, $n_r^i=b_r^i-b_{r-2}^i$.

The first values are obtained directly from the regularity of MRLS, $n^1_0=n^2_0=b^1_0=b^2_0=1$, $n^1_1=b^1_1=u$, and $n^2_1=b^2_2=R$. Afterwards, we deal with $\mathcal B_r(c)$ as if it were the neighborhood of an arbitrary set. This is, for $1\leq r$, $\mathcal B_r(c)=\mathcal N(\mathcal B_{r-1}(c))$, where the neighborhood $\mathcal N$ is defined as $\mathcal N(X)=\bigcup_{a\in X}\mathcal S_1(a)$. Therefore,
\begin{equation}\label{eq:distance-distribution-iteration}
	b^i_{r+1}=\eta^{i+r}(b^i_r),
\end{equation}where $\eta^i(x)=E[|\mathcal N(X)| \mid |X|=x, \ \text{all switches of $X$ are at level $i$}]$.

Estimating the value of $|\mathcal N(X)|$ is a variant of the coupon collector problem. In particular, as a consequence of Theorem 2.1 in  \cite{Kan} it can be shown that

\begin{equation}\label{eq:eta}
	\eta^i(x)=N_{i+1}\left(1-\exp\left(-\frac{x n^i_1}{N_{i+1}}\right)\right).
\end{equation}
Then, iterating (\ref{eq:distance-distribution-iteration}) provides the expected count of switches at every distance. For the average distance, take only even distances starting at a leaf switch, this is,
\begin{equation*}
	A = \frac{1}{N_1-1}\sum_{i=1}^{\lfloor D/2\rfloor} 2i n^1_{2i}.
\end{equation*}

In the same way, $A^*$ can also be computed. In this case, without skipping any term and starting from both leaf and spine switches. Therefore, $A^* = (N-1)^{-1}\sum_{k=1}^{D^*} kn^1_k + kn^2_k$.

\subsection{Threshold Function for $D^*$}

A key figure of random networks is the value at which the diameter increases. In \cite{Bollobas} functions that separate a property such as the diameter are denoted as \textsl{threshold functions}.
That is an asymptotic definition and thus lies outside the scope of our focus.

Let us consider a pair of switches $s$ and $t$, and their distance $\delta=d(s,t)$. Then $t\in \mathcal S_\delta(s)$ and $s\in \mathcal S_\delta(t)$. Furthermore, for any $0\leq i,j\leq \delta$ with $i+j=\delta$, $\mathcal S_i(s)\cap \mathcal S_j(t)$ contains those switches that belong to any minimal path and are at the indicated step. Hence, to consider the diameter threshold, our aim is to determine whether $\mathcal S_i(s)\cap \mathcal S_j(t)$ is empty or not.

To obtain the threshold separating $D^*=k$ from $D^*=k+1$ we consider a pair $i,j$ with $0\leq i,j\leq k-1$ and $i+j=k-1$. If some candidates $s,t$ fail to be within distance $k-1$, then $d(s,t)\geq k+1$. Note that it cannot be $k$ due to parity. In other words, it is the threshold separating $k-1$ from $k+1$ for the maximum distance between switches at the corresponding levels. Asymptotically, it is unimportant which pair $i,j$ to choose. However, such a decision can lead to small but appreciable errors. The most precise choice appears to be $i=1$ and $s$ a leaf switch. Note how this forces the intersection to occur at the spine level.

For the calculation of the actual threshold we need to consider all pairs. Hence, let $G$ be the total count of possible pairs.
That is, let $G=\binom{N_1}{2}$ when both $s$ and $t$ are leaf (even distance) and $G=N_1N_2$ otherwise. Let $\lambda$ be the expected number of pairs $(s,t)$ with $\mathcal S_1(s)\cap \mathcal S_{k-2}(t)=\emptyset$.
Writing $\lambda$ in terms of the probability of intersection of the considered spheres,
\begin{multline}\label{eq:empty-pairs}
	\lambda = GP[\mathcal S_1(s)\cap \mathcal S_{k-2}(t)=\emptyset \mid\\
		\qquad\text{$s$ leaf, $t$ leaf if $k$ odd, spine if $k$ even}].
\end{multline}
This gives the probability of having a diameter below a given $k$,
\begin{equation}\label{eq:prob-all-empty}
	P[D^*\leq k] = (1-\frac{\lambda}{G})^G = \exp\left(-\lambda+O\left(\frac{1}{G}\right)\right).
\end{equation}

Some remarks are needed regarding the probability that $\mathcal S_1(s)$ and $\mathcal S_{k-1}(t)$ have an empty intersection, which is considered next.

\subsection{Probability for Empty Intersection}

To calculate the probability in $(\ref{eq:empty-pairs})$ we need to consider the size for $X=\mathcal S_1(s)$ and $Y=\mathcal S_{k-2}(t)$.
This is the distance distribution estimated in previous subsection~\ref{sub:distance-distribution}.

Consider the probability that two random sets $X, Y\subset \Omega$ have empty intersection.
For convenience, we take $x=|X|$, $y=|Y|$ and $n=|\Omega|$.
Let us consider $\Omega$ and $Y$ to be fixed, and count the ways that $X$ could be made.
From the $n$ elements of $\Omega$ choose $x$ to form $X$.
The set $X\cap Y$ is empty when these $x$ choices are taken over the $n-y$ elements of $\Omega\setminus Y$, that is:

\begin{equation}\label{eq:prob-intersection}
	P[X\cap Y=\emptyset] = \frac{\binom{n-y}{x}}{\binom{n}{x}} = \frac{(n-x)!(n-y)!}{(n-x-y)!n!}
\end{equation}
This expression is symmetric in $x$ and $y$, and therefore in $X$ and $Y$, as it can be observed in
the right side of (\ref{eq:prob-intersection}).

Using equations (\ref{eq:distance-distribution-iteration}),(\ref{eq:eta}),(\ref{eq:empty-pairs}),(\ref{eq:prob-all-empty}),(\ref{eq:prob-intersection}) yields values that fit experimental data, such as those drawn in Figure~\ref{fig:scal-spectrum}.

\end{document}